\theoremstyle{definition} 
\theoremstyle{definition} 
\newtheorem {theorem} {Theorem}
\newtheorem {lemma} {Lemma}
\title{Mediated Semi-Quantum Key Distribution}
\author{Walter O. Krawec\\\small{Stevens Institute of Technology}\\\small{Hoboken NJ, 07030 USA}\\\small{\texttt{walter.krawec@gmail.com}}}
\begin{document}
\maketitle
\begin{abstract}
In this paper, we design a new quantum key distribution protocol, allowing two limited semi-quantum or ``classical'' users to establish a shared secret key with the help of a fully quantum server.  A semi-quantum user can only prepare and measure qubits in the computational basis and so must rely on this quantum server to produce qubits in alternative bases and also to perform alternative measurements.  However, we assume that the sever is untrusted and we prove the unconditional security of our protocol even in the worst case: when this quantum server is an all-powerful adversary.  We also compute a lower bound of the key rate of our protocol, in the asymptotic scenario, as a function of the observed error rate in the channel allowing us to compute the maximally tolerated error of our protocol.  Our results show that a semi-quantum protocol may hold similar security to a fully quantum one.
\end{abstract}


\section{Introduction}

Quantum Key Distribution (QKD) protocols (see \cite{QKD-survey} for a general survey) allow two parties, customarily referred to as Alice ($A$) and Bob ($B$), to agree on a shared secret key (a string of random classical bits) even in the presence of an all-powerful adversary: Eve ($E$).  Semi-Quantum Key Distribution (SQKD), first introduced in \cite{SQKD-first}, involves the construction of QKD  protocols whereby one party (typically $B$) is only allowed to measure and prepare qubits in the computational $Z$ basis (written here $\ket{0}, \ket{1}$).  Due to this limitation, $B$ is commonly called a ``classical Bob''.  Alice, however, is allowed to prepare and measure in either the $Z$ or $X$ basis (the latter being those states $\ket{+} = \frac{1}{\sqrt{2}}(\ket{0} + \ket{1})$ and $\ket{-} = \frac{1}{\sqrt{2}}(\ket{0} - \ket{1})$).  Such protocols are of great theoretical interest as they help to ascertain exactly how ``quantum'' a protocol need be in order to gain an advantage over its classical counterpart \cite{SQKD-first,SQKD-second}.

In more detail, an SQKD protocol operates over a two-way quantum communication channel which allows a qubit to travel from $A$, to $B$, then back to $A$ (passing through the adversary $E$ twice).  On each iteration of such a protocol, $A$ prepares a qubit in either the $Z$ or $X$ basis and sends it to $B$ who may then perform one of the following actions:

\begin{enumerate}
  \item \textbf{Measure and Resend}: $B$ may measure the incoming qubit in the $Z$ basis recording the result.  If his measurement result was $\ket{r}$ (for $r \in \{0,1\}$), he then sends a new qubit: $\ket{r}$ back to $A$.
  \item \textbf{Reflect}: Alternatively, $B$ may simply ``reflect'' the qubit back to $A$ without otherwise disturbing it (or learning anything about its state).
\end{enumerate}
A party limited to the above operations is typically referred to as ``classical'' or ``semi-quantum''.

After communicating with each other using the quantum channel, $A$ and $B$ each hold a \emph{raw key} which we denote \texttt{info}$_A$ and \texttt{info}$_B$ respectively.  It is hoped that these two strings are highly correlated and that an adversary has limited information on either.  Following the quantum communication stage, error correction (EC) and privacy amplification (PA) are performed to distill a (smaller) secure secret key.  The reader is referred to \cite{QKD-survey} for more information on these, now standard, processes.

In this paper we design a new SQKD protocol which allows two semi-quantum/``classical'' users $A$ and $B$ to distill a secure key with the help of an untrusted, fully quantum center/server $C$ (see Figure \ref{fig:schematic} for a diagram of this scenario).  Such a protocol we call a \emph{mediated semi-quantum key distribution protocol} (it is also a type of \emph{multi-user quantum key distribution protocol} - we will use the two terms interchangeably).  In this scenario, $C$ will prepare quantum states, and forward them to $A$ and $B$.  Since $A$ and $B$ may only reflect or measure in the $Z$ basis, they must rely on $C$ to perform measurements in alternative bases to verify the security of the quantum channel.  We assume, however, that, not only may there be an independent, all-powerful adversary $E$ listening on the quantum channel, but also that $C$ may be an adversary (see Figure \ref{fig:schematic}).  Despite this, we will show our protocol is unconditionally secure by computing the key-rate of our protocol in the asymptotic scenario \cite{QKD-survey}, against an adversarial center and discuss its maximally tolerated error rate.  While other semi-quantum protocols have been shown robust \cite{SQKD-first}, or secure against individual attacks only \cite{SQKD-information}, we are not aware of any similar computation yet done for other semi-quantum protocols (mediated or otherwise); thus our work demonstrates for the first time, that a semi-quantum key distribution protocol may hold comparable security to that of a fully quantum one.

\begin{figure}
\includegraphics[width=\linewidth]{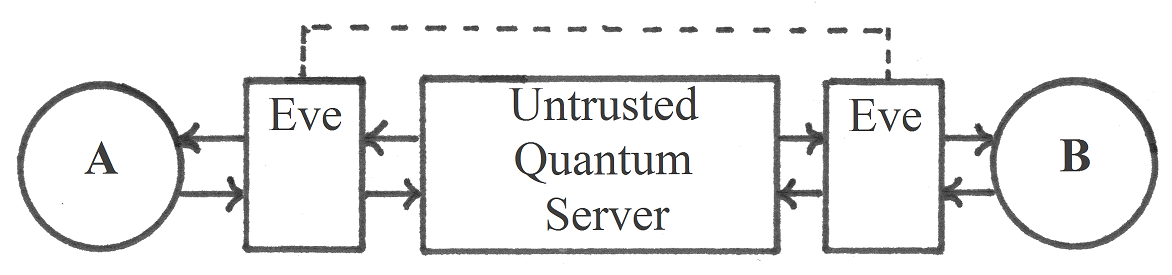}
\caption{A mediated semi-quantum key distribution protocol.  Here, $A$ and $B$ are the two semi-quantum users who wish to distill a secure secret key known only to themselves.  These two users are connected to an untrusted, fully quantum server, by a two-way quantum channel.  The sever is able to send qubits to $A$ and $B$, who may then both return a qubit to the sever.  Not only is the server considered to be adversarial in our work, but there may also exist an eavesdropper sitting between one or both of the channels connecting the sever to $A$ or $B$.  The dashed line connecting the two eavesdroppers in this figure represents a possible connection, either quantum, classical, or physical (in case they are located in the same physical location - perhaps at a branch point in the quantum channel) between the eavesdroppers.}\label{fig:schematic}
\end{figure}

\section{Related Work}

The first multi-user QKD protocol that we are aware of was developed by Phoenix et al. \cite{QKD-Multi-first} and it involved the center $C$ sending Bell states, with one particle sent to each party.  Encoding was done by $A$ and $B$ performing unitary operations on their respective qubits and returning the result to the center, who performed a measurement, informing $A$ and $B$ of the result.

In \cite{QKD-Multi-with-memory}, a system was described whereby $A$ and $B$ send particles, prepared in one of the states $\ket{0}, \ket{1}, \ket{+}$, or $\ket{-}$, to a quantum center who must then store them in a quantum memory until such future time when a secret key is desired.  The protocols of \cite{QKD-Multi-second,QKD-Multi-third,QKD-Multi-fifth-comment,QKD-Multi-authenticate} required the users to measure in both $Z$ and $X$ bases, while the protocols of \cite{QKD-Multi-third,QKD-Multi-fifth} require $A$ and $B$ to apply various unitary operators to the qubits arriving to them from the center.

Preparation of Bell states was required of both users in the protocol of \cite{QKD-Multi-fourth}; here the center was used to perform Bell measurements (this protocol was later improved in \cite{QKD-Multi-fourth-comment,QKD-Multi-fourth-comment2} forcing the two users to also measure in both bases thus avoiding a security flaw whereby the center was able to learn the key).  In \cite{QKD-Multi-fifth} (later improved in \cite{QKD-Multi-fifth-comment}), the center was required to prepare states involving the entanglement of four qubits.  Furthermore, one of the users was required to perform a Bell measurement.

On the semi-quantum side, the work most similar to ours are the protocols of \cite{SQKD-multi1,SQKD-cl-A}.  The protocol described in \cite{SQKD-multi1} permits a fully quantum $A$ to share a key with a series of semi-quantum users $B_1, B_2, \cdots, B_n$.  However, the fully-quantum center $A$ is also fully-trusted (since $A$ also desires to share the key with each $B_i$).  The protocol described in \cite{SQKD-cl-A}, like our protocol, permits two classical users to distill a secure secret key with the help of an untrusted quantum server; however it also required the existence of a private quantum channel, which the quantum server could not access (though, indeed, $E$ is allowed access), connecting the two users.

\section{The Protocol}

Denote by $A$ and $B$ the two semi-quantum users who wish to distill a shared, secure secret key, and by $C$ the untrusted quantum center/server.  We assume that there is a quantum communication channel connecting $A$ to $C$ and $B$ to $C$.  We also assume there is an authenticated classical channel connecting $A$ and $B$.  Anyone may listen on this classical channel, however only $A$ and $B$ may write to it.  There is a classical channel connecting $C$ to either $A$ or $B$.  Any message that $C$ sends to $A$ (or $B$) is ``relayed'' by that recipient, over the authenticated channel, to $B$ (or $A$).  Thus, $C$ cannot send different messages to each of the two users without it being known.

Whether the channel connecting $C$ to the semi-quantum users needs to be authenticated depends on the application.  In the next section, where we prove the security of the protocol, we assume that $C$ is adversarial; however we assume, at first, that there are no third party eavesdroppers.  In this scenario, there is clearly no need for $C$'s messages to be authenticated.  Later, we consider the worst case: $C$ is adversarial, and there are other, third party eavesdroppers.  However, in this case, the third party eavesdroppers' attacks (including any attack which attempts to impersonate $C$ on the classical channel and alter his messages) may be ``absorbed'' into an attack considered in the first case; thus, our analysis is exactly the same, and there is no need, as far as the ``worst-case'' security bounds we prove are concerned, for the channel from $C$ to be authenticated (the channel between $A$ and $B$, however, does need to be authenticated).  Of course, if the quantum server is, for example, a payed service, then an authenticated channel might be necessary to prevent these third-party attackers from altering $C$'s messages (and indeed a better security bound may be possible in this case as we mention later).  Note that these issues are not unique to our protocol.

Our protocol utilizes the Bell basis whose states we denote:
\begin{align*}
\ket{\Phi^\pm} &= \frac{1}{\sqrt{2}}(\ket{00} \pm \ket{11})\\
\ket{\Psi^\pm} &= \frac{1}{\sqrt{2}}(\ket{01} \pm \ket{10})\\
\end{align*}

We will first describe our protocol assuming a perfectly honest and trustworthy center.  A single iteration $i$ of our protocol consists of the following process:

\begin{enumerate}
  \item $C$ prepares the Bell state $\ket{\Phi^+}$.  The center will then send one particle to $A$ and the other to $B$.
  \item $A$ ($B$) will choose randomly to reflect the received qubit, with probability $p_R^A$ ($p_R^B$), or measure and resend (with probability $p_M^{A/B} = 1-p_R^{A/B}$).  If she (he) measures and resends, she (he) will save the result as $k^A_i$ ($k^B_i$).  Note that $A$ and $B$ choose independently their action and do not yet disclose their choice.
  \item $C$ will receive both particles from $A$ and $B$ and will perform a Bell measurement.  Assuming no other noise, this measurement should produce either $\ket{\Phi^+}$ or $\ket{\Phi^-}$ (if there is noise, the measurement might alternatively produce one of $\ket{\Psi^\pm}$).  She will inform $A$ and $B$ of the measurement result.  For simplicity, $C$ will send to $A$ and $B$ the message ``$-1$'' if the measurement result was $\ket{\Phi^-}$ and ``$+1$'' for any other result.
  \item $A$ and $B$ disclose whether they reflected or measured.  If both $A$ and $B$ reflected, they should expect $C$'s answer to be ``$+1$'' - if they receive ``$-1$'', this round is counted as an error; if the number of errors exceeds some threshold $\tau$ (a security parameter set by $A$ and $B$), both $A$ and $B$ abort.
  \item If both $A$ and $B$ measured and if $C$ reported ``$-1$'', they will keep their raw key bits $k_i^A$ and $k_i^B$, otherwise, if $C$ sent ``$+1$'' they will disregard this iteration.
\end{enumerate}

Note that an iteration is used for the key bit only if $C$ sends ``$-1$''; while decreasing the number of ``useful'' iterations, this turns out to be a vital addition to our protocol.  This process repeats $N$ times for $N$ sufficiently large.  Assuming no noise, it is expected that $Np_M^Ap_M^B/2$ of these iterations contribute to the final raw key $\texttt{info}_A$ and $\texttt{info}_B$ (following \cite{QKD-BB84-Modification}, $A$ and $B$ may set $p_M$ arbitrarily close to $1$).  Next $A$ and $B$ will divulge a randomly chosen subset of their measurement results to estimate certain channel statistics; for example the $Z$ basis error rate (next section we discuss the required parameters that must be estimated).  Following this, $A$ and $B$ will perform error correction and privacy amplification.  This is done independent of $C$, however we assume that the center can listen in on and utilize this information.

\section{Security Against Collective Attacks}

Of course the point of our protocol is to operate assuming an untrusted, adversarial center.  Such a center $C$ we assume will prepare and send any state of his choice in step (1) (however, he will send only a single qubit to each party).  This includes the possibility of preparing a state entangled with his own private ancilla.  Also on step (3), we allow $C$ to perform any operation of his choice (within the laws of physics of course).  This includes the possibility of him keeping, for now, information unmeasured in a private quantum ancilla.  However he must send a message either ``$+1$'' or ``$-1$'' to both $A$ and $B$ (since anything written on the classical channel is known to both $A$ and $B$, we might as well assume $C$ is restricted to sending the same message to both - that is, he cannot send ``$+1$'' to $A$ and ``$-1$'' to $B$ without being caught).  It is $C$'s goal to learn the secret key which $A$ and $B$ eventually distill.

We will consider \emph{collective attacks} in this section.  These are attacks whereby $C$ performs the same attack each iteration; however $C$ is free to keep quantum information in his private ancilla for later measurement.  For example, it might be advantageous for $C$ to choose an optimal POVM after seeing a cipher text, encrypted using $A$ and $B$'s distilled key after EC and PA.  This is stronger than an \emph{individual attack} where $C$ is forced to perform his measurement before the key is used for anything.  Later, we will show security against \emph{general attacks} where $C$ is allowed to do anything he likes each iteration (perhaps altering his attack based on the result of attacking the previous iterations).  For more information on these attack models, the reader is referred to \cite{QKD-survey}.  For this section, we will also assume there is no third party eavesdropper $E$ - this scenario will be dealt with in the next section.

We model the center's strategy as follows: first $C$ will send not $\ket{\Phi^+}$ but instead any arbitrary state $\ket{\psi_0} \in \mathcal{H} := \mathcal{H}_{T_A} \otimes \mathcal{H}_{T_B} \otimes \mathcal{H}_C$.  Here $\mathcal{H}_{T_A}$ and $\mathcal{H}_{T_B}$ are two-dimensional Hilbert spaces used to model the particle sent/transmitted to $A$, respectively $B$; $\mathcal{H}_C$ is $C$'s private ancilla which we assume is finite dimensional.  The sent state, which we may assume, without loss of generality, to be pure, may be written as:

\begin{equation}\label{eq:start-entangled}
\ket{\psi_0} = \sum_{i,j\in\{0,1\}} \alpha_{i,j}\ket{i,j}_{T_A,T_B} \otimes \ket{c_{i,j}}_C.
\end{equation}
Here, each $\alpha_{i,j} \in \mathbb{C}$ and $\sum_{i,j}|\alpha_{i,j}|^2 = 1$.  We may assume that each $\ket{c_{i,j}}$ is normalized, though they are not necessarily orthogonal.

Next, after $A$ and $B$ perform their operations and return the qubits to $C$, we permit $C$ to perform any quantum operation on the returned state.  Since he must also report a single bit of classical information, we model this as a \emph{quantum instrument} \cite{QC-Instrument} $\mathcal{I}$ such that for any density matrix $\rho$ acting on $\mathcal{H}$:

\[
\mathcal{I}(\rho) = \sum_{k=1}^{N_0}E_{k,0}\rho E_{k,0}^* \otimes \ket{+1}\bra{+1}_{cl} + \sum_{k=1}^{N_1}E_{k,1}\rho E_{k,1}^* \otimes \ket{-1}\bra{-1}_{cl},
\]
under the condition that:
\[
\sum_{k=1}^{N_0}E_{k,0}^*E_{k,0} + \sum_{k=1}^{N_1}E_{k,1}^*E_{k,1} = I.
\]
In our case, we may assume $N_0$ and $N_1$ are both finite.

Note we have expanded the underlying Hilbert space, adding the subspace $\mathcal{H}_{cl}$: a two dimensional subspace spanned by orthonormal basis $\{\ket{+1}_{cl}, \ket{-1}_{cl}\}$.  This is used to model the classical message $C$ sends to $A$ and $B$ (either ``$+1$'' or ``$-1$'').  Since we assume $C$ is forced to send the same message to $A$ and $B$ (e.g., he cannot send ``$+1$'' to $A$ and ``$-1$'' to $B$), this is sufficient.

\subsection{Unentangled Initial State}
Our first claim is that $C$ may, without any loss of power, send not $\ket{\psi_0}$ as described above (Equation \ref{eq:start-entangled}), but instead the far simpler state: $\ket{\psi_0'} = \sum_{i,j}\alpha_{i,j}\ket{i,j}$.  That is, $C$ need not entangle the start state with his ancilla at first.  This is similar to a result, concerning the security of single state semi-quantum protocols (not mediated ones), first presented in \cite{SQKD-Single-Security} (Lemma 2.1); and, indeed, we use a similar technique for the proof in this subsection.

Assume $C$ sends the entangled state $\ket{\psi_0}$.  Then, when he receives a reply from both $A$ and $B$, the system is in the mixed state:

\[
\rho = p_R^A\cdot p_R^B\ket{\psi_0}\bra{\psi_0} + p_M\left(\sum_{i,j\in\{0,1\}} |\alpha_{i,j}|^2 \ket{i,j,c_{i,j}}\bra{i,j,c_{i,j}}\right),
\]
where $p_M$ is the probability that one or both of $A$ and $B$ measured and resent.  On the other hand, if $C$ simply sends $\ket{\psi_0'}$, the resulting state is:

\[
\rho' = p_R^Ap_R^B\ket{\psi_0'}\bra{\psi_0'}\otimes\ket{0}\bra{0}_C + p_M\left(\sum_{i,j}|\alpha_{i,j}|^2\ket{i,j}\bra{i,j}\otimes\ket{0}\bra{0}_C\right).
\]

Above, we have assumed, without loss of generality, that $C$'s ancilla is cleared to some zero state $\ket{0}_C$.  Define $V$ to be a unitary operator such that $V\ket{i,j,0} = \ket{i,j,c_{i,j}}$.  This may be done unitarily and it is clear that $V\rho'V^* = \rho$.  Thus any arbitrary, entangled state $\ket{\psi_0}$ which $C$ may desire to send at the start may be ``recreated'' later without any loss to $C$.  Therefore, when analyzing the security of this protocol, we need only consider $C$ sending $\ket{\psi_0'}$ - $C$ may then absorb $V$ into his attack $\mathcal{I}$.  As such, we will forgo writing the ``prime''.

\subsection{Unitary Attack}
Next, using standard techniques (see, for instance, \cite{QC-intro,QC-measurement-book}), we may represent $C$'s quantum instrument as a unitary operator acting on a larger Hilbert space.  That is, instead of applying $\mathcal{I}$ on the returned state, $C$ will apply a unitary operator $U$ followed by a projective measurement on the $\mathcal{H}_{cl}$ subspace.

Define $U_{\mathcal{I}}$ as follows:

\[
U_{\mathcal{I}} := \sum_{k=1}^{N_0} E_{k,0} \otimes \ket{k}_{E_1}\ket{0}_{E_2}\ket{+1}_{cl} + \sum_{k=1}^{N_1} E_{k,1} \otimes \ket{N_0+k}_{E_1}\ket{1}_{E_2}\ket{-1}_{cl}.
\]

It is clear that this is an isometry mapping $\mathcal{H} \rightarrow \mathcal{H} \otimes \mathcal{H}_{E_1} \otimes \mathcal{H}_{E_2} \otimes \mathcal{H}_{cl}$ where $\mathcal{H}_{E_1}$ is spanned by orthonormal basis $\{\ket{i}_{E_1} \text{ } | \text{ } i = 1, 2, \cdots, N_0+N_1\}$ and $\mathcal{H}_{E_2}$ is spanned by orthonormal basis $\{\ket{0}_{E_2}, \ket{1}_{E_2}\}$.  That this is an isometry is clear:

\[
\begin{array}{rccl}
U_{\mathcal{I}}^*U_{\mathcal{I}} &=&& \left(\sum_{k=1}^{N_0} E_{k,0}^* \otimes \bra{k,0,+1} + \sum_{k=1}^{N_1} E_{k,1}^* \otimes \bra{N_0+k,1,-1}\right)\\
&&\cdot&\left(\sum_{k=1}^{N_0} E_{k,0} \otimes \ket{k,0,+1} + \sum_{k=1}^{N_1} E_{k,1} \otimes \ket{N_0+k,1,-1}\right)\\\\
&=&& \sum_{k=1}^{N_0}E_{k,0}^*E_{k,0} \otimes \braket{k,0,+1|k,0,+1}\\
&&+&\sum_{k=1}^{N_1}E_{k,1}^*E_{k,1}\otimes\braket{N_0+k,1,-1|N_0+k,1,-1}\\\\
&=&&\sum_{k=1}^{N_0}E_{k,0}^*E_{k,0} + \sum_{k=1}^{N_1}E_{k,1}^*E_{k,1} = I.
\end{array}
\]

It is also clear that, given $\ket{\psi} \in \mathcal{H}$, then $tr_{E_1,E_2}(U_{\mathcal{I}}\ket{\psi}\bra{\psi}U_{\mathcal{I}}^*) \equiv \mathcal{I}(\ket{\psi}\bra{\psi})$ (this is also the case for any mixed state due to linearity).

We may assume that $C$ has access to $\mathcal{H}_{E_1} \otimes \mathcal{H}_{E_2}$ (this can only add to $C$'s power).  Furthermore, we may assume without loss of generality that when $C$ receives a state $\ket{\psi} \in \mathcal{H}$ from $A$ and $B$ (after they performed their operations), that the complete quantum state which $C$ then holds is $\ket{\psi,0,0,0} \in \mathcal{H}\otimes\mathcal{H}_{E_1}\otimes\mathcal{H}_{E_2}\otimes\mathcal{H}_{cl}$ (where $\ket{0}_{cl}$ is some arbitrary, normalized, zero state in $\mathcal{H}_{cl}$, the choice of which is irrelevant).

Finally, $U_{\mathcal{I}}$ may be extended to a unitary operator $\mathcal{U}_{\mathcal{I}}$, so that $\mathcal{U}_{\mathcal{I}}\ket{\psi}\otimes\ket{0,0,0}_{E_1,E_2,cl} \equiv U_{\mathcal{I}}\ket{\psi}$ in a straight-forward manner (its action on states $\ket{\psi}\otimes\ket{i,j,l}$ for $i,j,l\ne0$ is irrelevant).

Thus:

\[
tr_{E_1,E_2}\left(\mathcal{U}_{\mathcal{I}}\ket{\psi,0,0,0}\bra{\psi,0,0,0}\mathcal{U}_{\mathcal{I}}^*\right) = tr_{E_1,E_2}\left(U_{\mathcal{I}}\ket{\psi}\bra{\psi}U_{\mathcal{I}}^*\right) = \mathcal{I}(\ket{\psi}\bra{\psi}),
\]
and so we may therefore safely consider only unitary operators when discussing $C$'s attack.  The result of $C$ sending the classical message ``$+1$'' is equivalent to projecting the $\mathcal{H}_{cl}$ portion of the resulting state (after applying $\mathcal{U}_\mathcal{I}$) to $\ket{+1}_{cl}$; similarly for sending ``$-1$''.  Thus $C$'s choice/decision as to what message to send, followed by the resulting state of his quantum system, is simply modeled by a projective measurement of the $\mathcal{H}_{cl}$ subspace in the $\{\ket{+1}_{cl}, \ket{-1}_{cl}\}$ basis.  Furthermore, we will simply ``absorb'' the auxiliary spaces $\mathcal{H}_{E_i}$ into $\mathcal{H}_C$ from this point forward.

\subsection{Key Rate}
We may now consider the key rate of our protocol in the asymptotic scenario \cite{QKD-survey}.  That is, after completing $n$ successful iterations of our protocol (those iterations where $A$ and $B$ both measure and $C$ sends ``$-1$'' - thus they each have an \texttt{info} string/raw key of $n$ bits), $A$ and $B$ will, as is usual with QKD protocols, perform an error correcting (EC) routine followed by privacy amplification (PA) to distill a secure secret key of size $l(n) \le n$.  Here $l(n)$ depends on the amount of information $C$ may hold on the key.  The key rate, then, is defined as:

\begin{equation}\label{eq:keyrate}
r := \lim_{n \rightarrow \infty} \frac{l(n)}{n}.
\end{equation}

Under collective attacks, one may model the conclusion of $n$ successful iterations of our protocol, before EC and PA, with the following density matrix:

\begin{equation}\label{eq:cqq-state}
\rho_{ABC} = \sum_{x,y\in\{0,1\}} p(x,y) \ket{x,y}\bra{x,y} \otimes \rho_C^{(x,y)},
\end{equation}
where $p(x,y)$ is the probability of $A$ and $B$'s raw key being $x$ and $y$ respectively, while $\rho_C^{(x,y)}$ is the state of $C$'s ancilla in this event.

Using this terminology, we may bound the key-rate $r$ using Devetak-Winter's equation \cite{QKD-Winter-Keyrate}:
\begin{equation}
r \ge I(A:B) - I(A:C)
\end{equation}
where $I(A:C) = S(\rho_A) + S(\rho_C) - S(\rho_{AC})$ is the quantum mutual information held between $A$ and $C$ (here we use $S(\cdot)$ to denote the von Neumann entropy).  Similarly, $I(A:B)$ is the mutual information between $A$ and $B$, though this computation uses the classical Shannon entropy function; that is $I(A:B) = H(A) + H(B) - H(A,B)$.  For the states in question, namely those of the form described by Equation \ref{eq:cqq-state}, it is not difficult to see that $I(A:C) = S(\rho_C) - \sum_x p(x)S(\rho_C^{(x)})$; note that here we have traced out $B$ from the original state.  See Ref \cite{QKD-Winter-Keyrate} for more details.  Observe, however, that if $r > 0$, $A$ and $B$ may, after EC and PA, distill a secure secret key.

After a particular run of a QKD protocol, the quantity $I(A:B)$ is easily computed; the difficulty lies in the computation of $I(A:C)$.  Of course, if $A$ and $B$ have access only to certain statistics and not the entire attack description (that is, they observe certain probabilities of various events, but they do not know the actual attack operator $U$ that was employed against them and thus cannot explicitly construct Equation \ref{eq:cqq-state}), the problem is exacerbated.  However, they may estimate the key rate:
\begin{equation}\label{eq:keyrate2}
r \ge I(A:B) - \sup\{I(A:C)\},
\end{equation}
where the supremum is taken over all possible attack operators $C$ may employ, which conform the observed statistics \cite{QKD-renner-keyrate,QKD-renner-keyrate2}.  Computing an upper-bound for $I(A:C)$ will be our goal in the next subsection.

\subsection{An Upper-bound on $I(A:C)$}

Given a matrix $A$, we denote its trace norm by $||A||$.  If $A$ is an $n\times n$ Hermitian matrix and $\{\lambda_i\}_{i=1}^n$ are the eigenvalues of $A$, then $||A|| = \sum_{i=1}^n|\lambda_i|$.  An important result, which we will utilize later, was shown in \cite{QC-info-trace-bound} which upper-bounds the quantity:

\begin{equation}
S(\rho_C) - \frac{1}{2}S\left(\rho_C^{(0)}\right) - \frac{1}{2}S\left(\rho_C^{(1)}\right) \le \frac{1}{2}\left|\left|\rho_C^{(0)} - \rho_C^{(1)}\right|\right|.
\end{equation}

In our case, this implies:

\begin{equation}\label{eq:bound-info}
I(A:C) = S(\rho_C) - \frac{1}{2}S\left(\rho_C^{(0)}\right) - \frac{1}{2}S\left(\rho_C^{(1)}\right) \le \frac{1}{2}\left|\left|\rho_C^{(0)} - \rho_C^{(1)}\right|\right|.
\end{equation}

Before continuing, we will require a small lemma which, though fairly trivial, we include for completeness.

\begin{lemma}\label{lemma:eigenvalues}
Let $\rho = \ket{a}\bra{b} + \ket{b}\bra{a}$ where $\ket{a}$ and $\ket{b}$ are arbitrary elements of a $D$-dimensional Hilbert space $\mathcal{H}$ ($D < \infty$) such that $tr\rho = 0$.  Then:
\[
||\rho|| \le 2\sqrt{\braket{a|a}\braket{b|b}}.
\]
\end{lemma}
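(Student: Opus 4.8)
The plan is to exploit the fact that $\rho = \ket{a}\bra{b} + \ket{b}\bra{a}$ is a Hermitian operator whose range lies in the span of $\ket{a}$ and $\ket{b}$, so that its entire spectral content is captured by just two numbers. First I would note that $\rho$ is manifestly Hermitian and of rank at most two; hence it has at most two nonzero eigenvalues, say $\lambda_1$ and $\lambda_2$. The hypothesis $tr\,\rho = 0$ forces $\lambda_1 + \lambda_2 = 0$, so the nonzero eigenvalues form a pair $\pm\lambda$ with $\lambda \ge 0$. It follows immediately that the trace norm is $||\rho|| = |\lambda| + |-\lambda| = 2\lambda$, reducing the whole problem to bounding $\lambda$.

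To extract $\lambda$ I would compute $tr(\rho^2) = \lambda_1^2 + \lambda_2^2 = 2\lambda^2$ in two ways. Expanding $\rho^2$ and using $tr(\ket{u}\bra{v}) = \braket{v|u}$, a routine calculation gives $tr(\rho^2) = 2\braket{a|a}\braket{b|b} + \braket{a|b}^2 + \braket{b|a}^2$. Writing $z := \braket{a|b}$, so that $\braket{b|a} = \overline{z}$, the two cross terms combine to $z^2 + \overline{z}^2 = 2\,\mathrm{Re}(z^2)$, and therefore $\lambda^2 = \braket{a|a}\braket{b|b} + \mathrm{Re}(z^2)$.

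The single genuine idea in the argument — the step I would flag as the crux — is to observe that the trace-zero hypothesis does double duty. Besides fixing the eigenvalue structure used above, it also gives $0 = tr\,\rho = \braket{b|a} + \braket{a|b} = z + \overline{z} = 2\,\mathrm{Re}(z)$, so $z$ is purely imaginary; writing $z = it$ with $t$ real yields $z^2 = -t^2 \le 0$, hence $\mathrm{Re}(z^2) = z^2 \le 0$. Substituting this into $\lambda^2 = \braket{a|a}\braket{b|b} + \mathrm{Re}(z^2)$ gives $\lambda^2 \le \braket{a|a}\braket{b|b}$, and therefore $||\rho|| = 2\lambda \le 2\sqrt{\braket{a|a}\braket{b|b}}$, as claimed. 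I would close by disposing of the degenerate case in which $\ket{a}$ and $\ket{b}$ are linearly dependent: then $\rho$ has rank at most one, and a Hermitian rank-one operator with vanishing trace must be the zero operator, so the bound holds trivially and the rank-two analysis above covers all remaining cases.
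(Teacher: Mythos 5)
Your proof is correct, and it reaches the same bound as the paper by a noticeably different computational route. The paper proceeds by explicit basis reduction: it writes $\ket{a}=\alpha\ket{\tilde a}$ and $\ket{b}=x\ket{\tilde a}+y\ket{\zeta}$ for an orthonormal pair $\{\ket{\tilde a},\ket{\zeta}\}$, observes that $tr\rho=0$ kills the diagonal entry $\alpha^*x+x^*\alpha$, and reads off the eigenvalues $\pm|\alpha||y|$ of the resulting $2\times 2$ off-diagonal block, finally dropping the $-|x|^2$ term to get the inequality. You instead stay basis-free: rank $\le 2$ plus $tr\rho=0$ forces the spectrum $\{\pm\lambda\}$, and you extract $\lambda$ from $tr(\rho^2)=2\braket{a|a}\braket{b|b}+2\,\mathrm{Re}\!\left(\braket{a|b}^2\right)$. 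Both arguments hinge on exactly the same use of the hypothesis --- $tr\rho=0$ is equivalent to $\mathrm{Re}\braket{a|b}=0$ --- and both in fact yield the same exact value $\lambda^2=\braket{a|a}\braket{b|b}-|\braket{a|b}|^2$ before relaxing to the stated bound. What your version buys is the avoidance of any Gram--Schmidt bookkeeping and a cleaner identification of where the slack in the inequality lives (the discarded $-|\braket{a|b}|^2$); what the paper's version buys is an explicit diagonalization that makes the eigenvectors visible. One small remark: your closing treatment of the linearly dependent case is harmless but unnecessary, since your trace argument already shows rank one is impossible under $tr\rho=0$ (a rank-one Hermitian traceless operator is zero), so the $\pm\lambda$ analysis with $\lambda\ge 0$ already covers every case.
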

\begin{proof}
Note that $\rho$ is Hermitian and so its trace norm is simply the sum of the absolute values of its eigenvalues.  We may write $\ket{a} = \alpha\ket{\tilde{a}}$ and $\ket{b} = x\ket{\tilde{a}} + y\ket{\zeta}$ where $\alpha, x, y \in \mathbb{C}$, $\braket{\tilde{a}|\tilde{a}} = \braket{\zeta|\zeta} = 1$ and $\braket{\tilde{a}|\zeta} = 0$.  Furthermore, it holds that $|\alpha|^2 = \braket{a|a}$, $|x|^2 + |y|^2 = \braket{b|b}$ and $\braket{a|b} = \alpha^*x$ (here, if $z \in \mathbb{C}$ we use $z^*$ to denote the conjugate of $z$).  Note that $tr\rho = 0 \Rightarrow \braket{a|b} + \braket{b|a} = \alpha^*x + x^*\alpha = 0$.

Recall that the trace norm is invariant under unitary changes of basis (i.e. $||U\rho U^*|| = ||\rho||$).  Thus, choosing a suitable basis, the first two entries of which are $\{\ket{\tilde{a}}, \ket{\zeta}\}$, we may write:
\[
\rho \equiv \left(\begin{array}{cc}
\sigma&0\\
0&0
\end{array}\right),
\]
where $\sigma$ is a $2\times 2$ matrix:
\[
\sigma = \left(\begin{array}{cc}
0&\alpha y^*\\
\alpha^* y & 0
\end{array}\right).
\]

We see now there are at most two non-zero eigenvalues: they are $\lambda_\pm = \pm\sqrt{|\alpha|^2|y|^2} = \pm\sqrt{\braket{a|a}(\braket{b|b}-|x|^2)}$.  Thus:

\begin{equation}
||\rho|| = |\lambda_-| + |\lambda_+| = 2\lambda_+ = 2\sqrt{\braket{a|a}(\braket{b|b} - |x|^2)} \le 2\sqrt{\braket{a|a}\braket{b|b}}.
\end{equation}
The inequality follows from the fact that both $\braket{a|a}$ and $|x|^2$ are non-negative.
\end{proof}

We now return to our protocol.  Let $p_{i,j}$ be the probability that $A$ and $B$ measure $\ket{i,j}$.  This is a statistic which may be estimated and, as in \cite{QKD-keyrate-general} (where the security of the B92 \cite{QKD-B92} protocol was being proved), we may assume that $A$ and $B$ abort if $p_{00} \ne p_{11}$ or $p_{01} \ne p_{10}$.  That is, we force $C$ to use a ``symmetric'' attack with $p_{00} = p_{11} = (1-Q)/2$ and $p_{01} = p_{10} = Q/2$; thus $Q$ represents the probability that $A$ and $B$'s measurement results differ.

Let $U = \mathcal{U}_{\mathcal{I}}$ denote $C$'s attack operator.  From our earlier discussion, we may assume the state $C$ sends is not entangled with his private ancilla and so we need only consider $U$'s action on $\mathcal{H}_{T_A}\otimes\mathcal{H}_{T_B}$:

\begin{align}
U\ket{\Phi^+} &= \ket{e_0}\ket{+1} + \ket{f_0}\ket{-1}\label{eq:C-attack}\\
U\ket{\Phi^-} &= \ket{e_1}\ket{+1} + \ket{f_1}\ket{-1}\notag\\
U\ket{\Psi^+} &= \ket{e_2}\ket{+1} + \ket{f_2}\ket{-1}\notag\\
U\ket{\Psi^-} &= \ket{e_3}\ket{+1} + \ket{f_3}\ket{-1}\notag,
\end{align}
where the $\ket{e_i}$ and $\ket{f_i}$ are states, not necessarily normalized or orthogonal, in $\mathcal{H}_{T_A}\otimes\mathcal{H}_{T_B}\otimes\mathcal{H}_C$.  Above, we abused notation slightly: since we assume $C$'s system is cleared to some known ``zero'' state, we did not write it on the left-hand side of the above equation.  That is, $U$ is actually acting on $\ket{\Phi^+}\otimes\ket{0}_{C,cl}$ and so on.

By linearity, it follows that:

\begin{align*}
U\ket{00} &= \frac{1}{\sqrt{2}}(\ket{e_0} + \ket{e_1})\ket{+1} + \frac{1}{\sqrt{2}}(\ket{f_0} + \ket{f_1})\ket{-1}\\
U\ket{11} &= \frac{1}{\sqrt{2}}(\ket{e_0} - \ket{e_1})\ket{+1} + \frac{1}{\sqrt{2}}(\ket{f_0} - \ket{f_1})\ket{-1}\\
U\ket{01} &= \frac{1}{\sqrt{2}}(\ket{e_2} + \ket{e_3})\ket{+1} + \frac{1}{\sqrt{2}}(\ket{f_2} + \ket{f_3})\ket{-1}\\
U\ket{10} &= \frac{1}{\sqrt{2}}(\ket{e_2} - \ket{e_3})\ket{+1} + \frac{1}{\sqrt{2}}(\ket{f_2} - \ket{f_3})\ket{-1}.
\end{align*}

We make one additional assumption concerning the ``symmetry'' of $C$'s attack.  Besides forcing $p_{0,0} = p_{1,1}$ and $p_{1,0} = p_{0,1}$, we will have $A$ and $B$ enforce the condition that the probability of $C$ sending ``$-1$'' in the event both $A$ and $B$ measure $0$ is equal to the probability of $C$ sending ``$-1$'' in case $A$ and $B$ both measure $1$ (similarly for the case when $A$ and $B$'s measurements are different).  This statistic can easily be estimated by $A$ and $B$ and, in the asymptotic scenario, our condition is easily enforced.  Note that this forces $Re\braket{f_0|f_1} = Re\braket{f_2|f_3} = 0$.  Later, we will show an example which justifies these assumptions.

Finally, denote by $p_{a}$, the probability that $C$ sends ``$-1$'' on any particular iteration given that both $A$ and $B$ measured.  Given our symmetric attack assumptions above, this value is clearly:
\begin{equation}\label{eq:pra}
p_{a} = \frac{1}{2}(1-Q)(\braket{f_0|f_0} + \braket{f_1|f_1}) + \frac{1}{2}Q(\braket{f_2|f_2} + \braket{f_3|f_3}).
\end{equation}

We are now ready to prove an upper bound for $I(A:C)$.

\begin{theorem}\label{thm:main}
Given the above notation and assumptions, then:
\begin{align}
I(A:C) &\le\frac{(1-Q)}{p_a}\sqrt{\braket{f_0|f_0}\braket{f_1|f_1}} + \frac{Q}{p_a}\sqrt{\braket{f_2|f_2}\braket{f_3|f_3}}\label{eq:first-bound}\\
&\le\frac{(1-Q)}{p_a}\sqrt{\braket{f_0|f_0}} + \frac{Q}{p_a}.\label{eq:second-bound}
\end{align}
\end{theorem}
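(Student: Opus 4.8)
The plan is to start from the trace-distance bound in Equation~\ref{eq:bound-info}, namely $I(A:C) \le \tfrac{1}{2}||\rho_C^{(0)} - \rho_C^{(1)}||$, and to reduce the right-hand side to an expression involving only the $\ket{f_i}$ by tracking how $C$'s conditional ancilla states arise from the attack $U$. First I would identify these conditional states. In a successful iteration both users measure and resend, so if $A$ obtains $i$ and $B$ obtains $j$, then $C$ receives $\ket{i,j}$, applies $U$, and keeps the round precisely when the $\mathcal{H}_{cl}$ measurement yields $\ket{-1}$. Writing $U\ket{i,j} = \ket{g_{ij}}\ket{+1} + \ket{h_{ij}}\ket{-1}$ and reading off the $\ket{h_{ij}}$ from the linearity expansions of $U\ket{00},\dots,U\ket{10}$, the branch $A{=}i, B{=}j$ occurs with weight $p_{ij}$ and, after postselecting on $\ket{-1}$, contributes the (unnormalized) ancilla piece $\ket{h_{ij}}\bra{h_{ij}}$ (the success probability $\braket{h_{ij}|h_{ij}}$ cancels against the renormalization). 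Tracing out $B$ and conditioning on $A$'s bit, I would assemble
\[
\rho_C^{(0)} = \frac{1}{p_a}\left((1-Q)\ket{h_{00}}\bra{h_{00}} + Q\ket{h_{01}}\bra{h_{01}}\right), \quad
\rho_C^{(1)} = \frac{1}{p_a}\left((1-Q)\ket{h_{11}}\bra{h_{11}} + Q\ket{h_{10}}\bra{h_{10}}\right),
\]
after verifying, using the symmetry assumptions and Equation~\ref{eq:pra}, that $p(A{=}0) = p(A{=}1) = 1/2$, which is exactly what makes the symmetric $\tfrac12$-weighted form of Equation~\ref{eq:bound-info} applicable.

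The crucial simplification comes in the difference. Substituting $\ket{h_{00}} = \tfrac{1}{\sqrt2}(\ket{f_0}+\ket{f_1})$ and $\ket{h_{11}} = \tfrac{1}{\sqrt2}(\ket{f_0}-\ket{f_1})$, the diagonal terms $\ket{f_0}\bra{f_0}$ and $\ket{f_1}\bra{f_1}$ cancel, leaving $\ket{h_{00}}\bra{h_{00}} - \ket{h_{11}}\bra{h_{11}} = \ket{f_0}\bra{f_1} + \ket{f_1}\bra{f_0}$, and likewise for the $Q$-term with $f_2,f_3$. Hence
\[
\rho_C^{(0)} - \rho_C^{(1)} = \frac{1-Q}{p_a}\left(\ket{f_0}\bra{f_1} + \ket{f_1}\bra{f_0}\right) + \frac{Q}{p_a}\left(\ket{f_2}\bra{f_3} + \ket{f_3}\bra{f_2}\right).
\]

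I would then apply the triangle inequality for the trace norm and bound each rank-two piece with Lemma~\ref{lemma:eigenvalues}. The hypothesis of that lemma requires the operator to be traceless, and this is exactly where the symmetry assumption $Re\braket{f_0|f_1} = Re\braket{f_2|f_3} = 0$ enters: it gives $tr(\ket{f_0}\bra{f_1}+\ket{f_1}\bra{f_0}) = 2Re\braket{f_0|f_1} = 0$, and similarly for the $f_2,f_3$ term. The lemma then yields $||\ket{f_0}\bra{f_1}+\ket{f_1}\bra{f_0}|| \le 2\sqrt{\braket{f_0|f_0}\braket{f_1|f_1}}$; substituting into Equation~\ref{eq:bound-info} produces the first bound, Equation~\ref{eq:first-bound}, where the factor of $2$ from the lemma cancels the leading $\tfrac12$. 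The second bound, Equation~\ref{eq:second-bound}, follows immediately from the fact that $U$ is an isometry, so $\braket{e_i|e_i} + \braket{f_i|f_i} = 1$ forces each $\braket{f_i|f_i} \le 1$; replacing $\sqrt{\braket{f_1|f_1}}$ and $\sqrt{\braket{f_2|f_2}\braket{f_3|f_3}}$ by $1$ gives the stated weakening.

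I expect the main obstacle to be the bookkeeping of the conditional states $\rho_C^{(x)}$ --- keeping the weights $p_{ij}$, the $\ket{-1}$-postselection, and the tracing-out of $B$ mutually consistent --- together with spotting the diagonal cancellation that collapses the difference into a sum of off-diagonal cross terms. Once $\rho_C^{(0)} - \rho_C^{(1)}$ is in that clean form, the remainder is a direct invocation of Lemma~\ref{lemma:eigenvalues} and the isometry property, with the symmetry assumption supplying precisely the tracelessness hypothesis the lemma demands.
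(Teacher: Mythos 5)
Your proposal is correct and follows essentially the same route as the paper's proof: postselect on the ``$-1$'' outcome, trace out $B$ to obtain $\rho_C^{(0)}$ and $\rho_C^{(1)}$ as $\tfrac{1}{p_a}$-weighted mixtures of $P(\ket{f_0}\pm\ket{f_1})$ and $P(\ket{f_2}\pm\ket{f_3})$, observe the diagonal cancellation in the difference, and apply the triangle inequality together with Lemma \ref{lemma:eigenvalues} (whose tracelessness hypothesis is supplied by $Re\braket{f_0|f_1}=Re\braket{f_2|f_3}=0$), finishing with $\braket{f_i|f_i}\le 1$. The only cosmetic difference is that you phrase the conditional states via the vectors $\ket{h_{ij}}$ rather than the unnormalized sums $\ket{f_0}\pm\ket{f_1}$ directly, which changes nothing of substance.
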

\begin{proof}
Let $\rho_{ABC}$ denote the state of the quantum system assuming that $A$ and $B$ both measured and resent (otherwise the iteration is discarded and $C$ learns no information), after $C$ applied $U$, but before he performs any measurement on $\mathcal{H}_{cl}$.  Clearly we have:

\[
\rho_{ABC} = \left(\frac{1-Q}{2}\right)(U_{00} + U_{11}) + \left(\frac{Q}{2}\right)(U_{01} + U_{10}),
\]
where $U_{ij} = \ket{i,j}\bra{i,j}_{{A}{B}} \otimes U\ket{i,j}\bra{i,j}U^*$ (here $\ket{i,j}\bra{i,j}_{{A}{B}}$ are $A$ and $B$'s private registers storing their measurement results for this iteration).  Assume that $C$ measures and sends $\ket{-1}_{cl}$ (otherwise the iteration is discarded and there is nothing to learn).  Projecting $\rho_{ABC}$ to $\ket{-1}\bra{-1}_{cl}$ yields:

\begin{align*}
\rho'_{ABC} &= \frac{1}{p}\left(\left(\frac{1-Q}{2}\right)\ket{0,0}\bra{0,0}_{AB}\otimes P(\ket{f_0} + \ket{f_1})\right)\\
&+\frac{1}{p}\left(\left(\frac{1-Q}{2}\right)\ket{1,1}\bra{1,1}_{AB}\otimes P(\ket{f_0} - \ket{f_1})\right)\\
&+\frac{1}{p}\left(\frac{Q}{2}\ket{0,1}\bra{0,1}_{AB}\otimes P(\ket{f_2} + \ket{f_3})\right)\\
&+\frac{1}{p}\left(\frac{Q}{2}\ket{1,0}\bra{1,0}_{AB}\otimes P(\ket{f_2} - \ket{f_3})\right),
\end{align*}
where $P(z) := zz^*$ for any vector $z$ and $p = 2p_a$ (see Equation \ref{eq:pra}).

Tracing out $B$ yields:
\begin{align*}
\rho'_{AC} &= \frac{1}{2}\ket{0}\bra{0}_A \otimes \overbrace{\left(\frac{1-Q}{p}P(\ket{f_0} + \ket{f_1}) + \frac{Q}{p}P(\ket{f_2} + \ket{f_3})\right)}^{\rho_C^{(0)}}\\
&+\frac{1}{2}\ket{1}\bra{1}_A \otimes \underbrace{\left(\frac{1-Q}{p}P(\ket{f_0} - \ket{f_1}) + \frac{Q}{p}P(\ket{f_2} - \ket{f_3})\right)}_{\rho_C^{(1)}}.
\end{align*}

By our assumptions, $Re\braket{f_0|f_1} = Re\braket{f_2|f_3} = 0$ which implies $tr\rho_C^{(0)} = tr\rho_C^{(1)} = 1$.

Let $\widetilde{\rho} = \rho_C^{(0)} - \rho_C^{(1)}$.  Clearly we have:

\begin{equation}
\widetilde{\rho} = \frac{2}{p}((1-Q)\ket{f_0}\bra{f_1} + (1-Q)\ket{f_1}\bra{f_0} + Q\ket{f_2}\bra{f_3} + Q\ket{f_3}\bra{f_2}).
\end{equation}

Let $T := \frac{1}{2}||\widetilde{\rho}||$.  Then:

\begin{align*}
T &= \frac{1}{p}||(1-Q)\ket{f_0}\bra{f_1} + (1-Q)\ket{f_1}\bra{f_0} + Q\ket{f_2}\bra{f_3} + Q\ket{f_3}\bra{f_2}||\\
&\le \frac{1-Q}{2p_a}||\ket{f_0}\bra{f_1} + \ket{f_1}\bra{f_0}|| + \frac{Q}{2p_a}||\ket{f_2}\bra{f_3} + \ket{f_3}\bra{f_2}||.
\end{align*}

Let $\sigma_0 := \ket{f_0}\bra{f_1} + \ket{f_1}\bra{f_0}$ and $\sigma_1 := \ket{f_2}\bra{f_3} + \ket{f_3}\bra{f_2}$; we must bound $||\sigma_i||$.  By our symmetry assumptions which, as mentioned earlier, forces $Re\braket{f_0|f_1} = Re\braket{f_2|f_3} = 0$, it is clear that $tr\sigma_0 = tr\sigma_1 = 0$ and, so, Lemma \ref{lemma:eigenvalues} applies.  Thus:

\begin{align}
T &\le \frac{1-Q}{p_a}\sqrt{\braket{f_0|f_0}\braket{f_1|f_1}} + \frac{Q}{p_a}\sqrt{\braket{f_2|f_2}\braket{f_3|f_3}}.\label{eq:first-trace-bound}
\end{align}
This, combined with Equation \ref{eq:bound-info}, proves Equation \ref{eq:first-bound}.  Equation \ref{eq:second-bound} follows immediately by observing that $\braket{f_i|f_i} \le 1$ for all $i$.
\end{proof}

Thus, to upper-bound $I(A:C)$, $A$ and $B$ must estimate the values $\braket{f_i|f_i}$ which are simply the probabilities that $C$ sends ``$-1$'' in the event one of the four Bell states was received by him.  Of course, since $A$ and $B$ are semi-quantum, they cannot prepare such states and so cannot measure these values directly.  They can, however, estimate them as we demonstrate next section.

\subsection{Key Rate Estimates}

In order to justify the symmetry assumptions used above, we will first consider a specific scenario where we may call the center $C$ ``semi-honest'' (he follows the protocol description from the previous section, always sending $\ket{\Phi^+}$ on step 1, performing the correct measurements, and reporting the correct result, but beyond this, he may do anything he likes) and where each channel is modeled independently by a depolarizing channel with parameter $p$ in the forward direction and $q$ in the reverse.  That is, if $C$ sends the state $\rho$, the joint state, as it arrives to $A$ and $B$ is $(1-p)\rho + \frac{p}{4}I$; likewise, if the joint state leaving $A$ and $B$ is $\sigma$, the state arriving at $C$ is $(1-q)\sigma + \frac{q}{4}I$.  After this specific example, we will consider a ``worst-case'' scenario which lower-bounds the key rate over any possible attack an adversarial $C$ may perform.

For the following analysis, we will relabel the Bell states as: $\ket{\phi_0} = \ket{\Phi^+}$, $\ket{\phi_1} = \ket{\Phi^-}$, $\ket{\phi_2} = \ket{\Psi^+}$ and $\ket{\phi_3} = \ket{\Psi^-}$.  Since we are assuming a semi-honest $C$, he initially sends the state $\rho = \ket{\phi_0}\bra{\phi_0}$.  When it arrives at $A$ and $B$ however, the joint system is in the state:

\begin{equation}
\rho = (1-p)\ket{\phi_0}\bra{\phi_0} + \frac{p}{4}\sum_{i=0}^3\ket{\phi_i}\bra{\phi_i}.
\end{equation}

The probabilities $p_{i,j}$, that $A$ measures $\ket{i}$ and $B$ measures $\ket{j}$ is easily found to be:
\begin{align*}
p_{0,0} = p_{1,1} &= \frac{1}{2} - \frac{p}{4}\\
p_{0,1} = p_{1,0} &= \frac{p}{4}
\end{align*}
Since $\frac{Q}{2} = p_{0,1} = p_{1,0}$ (from the previous section), this implies $Q = \frac{p}{2}$; clearly, $p$ (and thus $Q$) is a parameter that $A$ and $B$ may estimate.  Also, this scenario fits our first requirement for a ``symmetric'' attack which was the assumption used last section.

Next, assume that $A$ and $B$ both reflect.  Then, the state arriving back at $C$ is:

\begin{align}
\rho' &= (1-q)\rho + \frac{q}{4}\sum_{i=0}^3\ket{\phi_i}\bra{\phi_i}\\
&= (1-q)\left[(1-p)\ket{\phi_0}\bra{\phi_0} + \frac{p}{4}\sum_{i=0}^3\ket{\phi_i}\bra{\phi_i}\right] + \frac{q}{4}\sum_i \ket{\phi_i}\bra{\phi_i}.
\end{align}

Since $C$ is semi-honest, he always reports ``$+1$'' if he measures $\ket{\phi_i}\bra{\phi_i}$ for $i \ne 1$.  From the above equation then, it is clear that the probability $p_w$ that he reports ``$-1$'' in the event both $A$ and $B$ reflect is:

\begin{equation}
p_w = \frac{(1-q)p}{4} + \frac{q}{4}.
\end{equation}

This allows $A$ and $B$ to estimate $q$:
\begin{equation}
q = \frac{4p_w - p}{1-p}
\end{equation}

We now have enough to estimate the values $\braket{f_i|f_i}$.  Recall that $\braket{f_i|f_i}$ is simply the probability that $C$ sends ``$-1$'' if the state of the joint system, when it leaves $A$ and $B$, is $\ket{\phi_i}$.  After receiving a state from $A$ and $B$, which passed through the depolarization channel with parameter $q$, $C$ applies a (limited) attack operator.  Since he is semi-honest this attack, along with the effects of the depolarization channel, leaves us with the following identities:

\begin{align*}
\ket{\phi_0}\bra{\phi_0} &\rightarrow \left((1-q) + \frac{3q}{4}\right)\sigma_{+1}^{(0)}\otimes\ket{+1}\bra{+1}_{cl} + \left(\frac{q}{4}\right)\sigma_{-1}^{(0)}\otimes\ket{-1}\bra{-1}_{cl}\\
\ket{\phi_1}\bra{\phi_1} &\rightarrow \left(\frac{3q}{4}\right)\sigma_{+1}^{(1)}\otimes\ket{+1}\bra{+1}_{cl} + \left((1-q) + \frac{q}{4}\right)\sigma_{-1}^{(1)}\otimes\ket{-1}\bra{-1}_{cl}\\
\ket{\phi_2}\bra{\phi_2} &\rightarrow \left((1-q) + \frac{3q}{4}\right)\sigma_{+1}^{(2)}\otimes\ket{+1}\bra{+1}_{cl} + \left(\frac{q}{4}\right)\sigma_{-1}^{(2)}\otimes\ket{-1}\bra{-1}_{cl}\\
\ket{\phi_3}\bra{\phi_3} &\rightarrow \left((1-q) + \frac{3q}{4}\right)\sigma_{+1}^{(3)}\otimes\ket{+1}\bra{+1}_{cl} + \left(\frac{q}{4}\right)\sigma_{-1}^{(3)}\otimes\ket{-1}\bra{-1}_{cl}.
\end{align*}

As in Equation \ref{eq:C-attack}, we have written the effects of the attack and depolarization channels with respect to the Bell basis.  Here $\sigma_{\pm1}^{(i)}$ is a trace one density matrix representing the state of $C$'s system in the event he sends ``$\pm1''$ if the joint state leaving $A$ and $B$ were to be $\ket{\phi_i}$.  Thus, $\braket{f_1|f_1} = (1-q)+\frac{q}{4}$, and $\braket{f_i|f_i} = \frac{q}{4}$ for all other $i = 0, 2,$ and $3$.  Since $p$ and $q$ are both parameters that $A$ and $B$ may estimate, they may, if they assume this attack model, estimate $\braket{f_i|f_i}$.  Furthermore, the reader may readily verify by performing the same computation above for the case when the state leaving $A$ and $B$ is $\ket{i,j}\bra{i,j}$ for $i,j \in \{0,1\}$, that the probability of $C$ sending ``$-1$'' in the event both $A$ and $B$ measure $0$ is equal to the probability of the same message being sent in case $A$ and $B$ both measure $1$; this probability is simply $\frac{1-q}{2} + \frac{q}{4}$.  Similarly for the probability of $C$ sending ``$-1$'' in the event $A$ and $B$'s measurements differ; that probability being $\frac{q}{4}$.  Clearly this scenario fits the symmetry assumptions we made in the last section; thus these assumptions do not penalize a server running the protocol honestly in the presence of channel noise.  Furthermore, we may apply our main security theorem.

Indeed, the above is enough to compute an upper-bound of $I(A:C)$ using Theorem \ref{thm:main}; however we still need to compute $I(A:B)$.  Let $Q_Z$ be the probability that $A$'s measurement result is different from $B$, given that $C$ sends ``$-1$'' (this may be different from $Q = 2p_{0,1} = 2p_{1,0}$.  Clearly this value is:

\begin{equation}\label{eq:Qz}
Q_Z = \frac{Q(\braket{f_2|f_2} + \braket{f_3|f_3})}{2p_a},
\end{equation}
which in this example is:
\begin{equation}
Q_Z = \frac{pq}{8p_a}.
\end{equation}

It is not difficult to show that $I(A:B) = 1-h(Q_Z)$ providing us with enough information now to estimate the key rate $r$.  For example, if $p=q$ (that is the noise in the forward channel is the same as the noise in the reverse), our protocol maintains a positive key rate for $Q \le 0.199$.  See Figure \ref{fig:honest-bound}.  This tolerated noise level is much higher than BB84 (which can withstand up to $.11$ noise without preprocessing \cite{QKD-renner-keyrate} before the key-rate drops to zero).  From Figure \ref{fig:honest-bound-error}, which plots the values of $p_w$, $p_a$, and $Q_Z$ as $Q$ (and thus $p$ and $q$) increase, we see the main reason: $Q_Z$ is very small due to $C$ ``throwing out'' almost all iterations when $A$ and $B$'s measurement results differ (there are of course some iterations when they are not thrown out due to the noise in the return channel).  Thus $I(A:B)$ remains large - i.e., there is little information leaked during error correction.

We now consider the worst-case scenario, where all noise is induced by $C$'s attack and that $C$ himself is fully adversarial.  Let $\ket{\psi_0} = \sum\alpha_{i,j}\ket{i,j}$, for $\alpha_{i,j} \in \mathbb{C}$, be the initial state sent from $C$ on step 1 of the protocol.  Since we restrict our attention to symmetric attacks, it is required that $|\alpha_{0,0}|^2 = |\alpha_{1,1}|^2 = (1-Q)/2$ and $|\alpha_{0,1}|^2 = |\alpha_{1,0}|^2 = Q/2$.  We claim that, without any loss of power to $C$, we may consider, instead, the joint state of the system arriving at $A$ and $B$, to be: $\ket{\tilde{\psi_0}} = \sqrt{1-Q}\ket{\Phi^+} + \sqrt{Q}\ket{\Psi^+}$.  Since in this worst-case scenario we are assuming all noise is due to $C$'s attack, we assume that this is the state that $C$ initially prepares in step (1) of the protocol.

Indeed, assume $C$ sends the original state $\ket{\psi_0}$.  Since we assume the worst-case that all noise is from $C$'s attack, the state of the system arriving back at $C$, after $A$ and $B$'s operation, is the mixed state:

\begin{align*}
\rho &= p_R\ket{\psi_0}\bra{\psi_0} + p_M\left(\sum_{i,j}|\alpha_{i,j}|^2\ket{i,j}\bra{i,j}\right),
\end{align*}
where $p_R$ is the probability that both $A$ and $B$ reflect and $p_M = 1-p_R$.  If, however, $C$ sends $\ket{\tilde{\psi_0}}$, the state of the joint system after $A$ and $B$'s operation is:

\begin{align*}
\tilde{\rho} = p_R\ket{\tilde{\psi}_0}\bra{\tilde{\psi}_0} + p_M&\left(\frac{1-Q}{2}\ket{0,0}\bra{0,0}+\frac{1-Q}{2}\ket{1,1}\bra{1,1}\right.\\
&+\left.\frac{Q}{2}\ket{0,1}\bra{0,1} + \frac{Q}{2}\ket{1,0}\bra{1,0}\right).
\end{align*}

Let $\alpha_{j,k} = e^{i\theta_{j,k}}p_{j,k}$ with $p_{j,k} = \sqrt{(1-Q)/2}$ if $j=k$; otherwise $p_{j,k} = \sqrt{Q/2}$.  Then, $V = \sum_{j,k}e^{i\theta_{j,k}}\ket{j,k}\bra{j,k}$ is a unitary operator and $V\tilde{\rho}V^* = \rho$.  Thus, if there were an advantage to sending the state $\ket{\psi_0}$ in step 1, $C$ may simply send $\ket{\tilde{\psi}_0}$ initially and apply the operator $V$ later when the system returns to him, all without any loss of power or advantage to $C$.  We may therefore assume $C$ sends the state $\ket{\tilde{\psi}_0}$ initially.

Let $U$ be $C$'s unitary attack operator (Equation \ref{eq:C-attack}).  Then, by linearity:
\[
U\ket{\tilde{\psi}_0} = (\sqrt{1-Q}\ket{e_0} + \sqrt{Q}\ket{e_2})\ket{+1}_{cl} + (\sqrt{1-Q}\ket{f_0} + \sqrt{Q}\ket{f_2})\ket{-1}_{cl}.
\]

Denote by $p_w$ the probability that $C$ sends ``$-1$'' in the event that $A$ and $B$ both reflect.  From the above, this value is:
\begin{equation}\label{eq:pw}
p_w = (1-Q)\braket{f_0|f_0} + Q\braket{f_2|f_2} + 2\sqrt{Q(1-Q)}Re\braket{f_0|f_2}.
\end{equation}
Clearly, if $Q$ and $p_w$ are both small, then so must be $\braket{f_0|f_0}$.  If we upper-bound this value, we attain an upper-bound on $I(A:C)$ using Theorem \ref{thm:main}, thus lower-bounding the key rate.  Of course we must bound this value based on statistics which $A$ and $B$ may measure.  In particular, we will use this value $p_w$.

Let $\ket{f_0} = x\ket{\tilde{f}_0}$ where $\braket{\tilde{f}_0|\tilde{f}_0} = 1$ and, without loss of generality, $x \in \mathbb{R}_{\ge 0}$ (any different phase may be absorbed into $\ket{\tilde{f}_0}$).  Clearly $x = \sqrt{\braket{f_0|f_0}}$.  Write $\ket{f_2} = ye^{i\theta}\ket{\tilde{f}_0} + z\ket{\zeta}$ with $y,z \in \mathbb{R}_{\ge 0}$, $\braket{\zeta|\zeta} = 1$, and $\braket{\tilde{f}_0|\zeta} = 0$, assumptions we may make without any loss of generality.  This of course implies that $y^2+z^2 = \braket{f_2|f_2}$ and $xye^{i\theta} = \braket{f_0|f_2}$.  Using this notation, $p_w$ (Equation \ref{eq:pw}) becomes:

\begin{equation}
p_w = (1-Q)x^2 + Q(y^2+z^2) + 2\sqrt{Q(1-Q)}xy\cos\theta.
\end{equation}

Solving for $x$ and taking the larger solution, we find:

\begin{align*}
x &= \sqrt{f_0|f_0} = \frac{-\sqrt{Q(1-Q)}y\cos\theta}{1-Q}\\
&+ \frac{\sqrt{Q(1-Q)y^2\cos^2\theta - Q(1-Q)(y^2+z^2)+p_w(1-Q)}}{1-Q}\\\\
&= \frac{-\sqrt{Q(1-Q)}y\cos\theta}{1-Q}\\
&+ \frac{\sqrt{Q(1-Q)y^2(\cos^2\theta - 1) - Q(1-Q)z^2 + p_w(1-Q)}}{1-Q}.
\end{align*}

It is trivial to show that this function is maximal when $\theta = \pi$, $y=\sqrt{\braket{f_2|f_2}}$ (note that $y$ can be no larger than this), and $z=0$ yielding:
\begin{align}
\sqrt{f_0|f_0} &\le \frac{\sqrt{1-Q}(\sqrt{Q\braket{f_2|f_2}}+\sqrt{p_w})}{1-Q}\label{eq:f0bound-small}\\
&\le \frac{\sqrt{1-Q}(\sqrt{Q}+\sqrt{p_w})}{1-Q}.\label{eq:f0bound-large}
\end{align}

Thus, using Equation \ref{eq:f0bound-large} with Equation \ref{eq:second-bound} from Theorem \ref{thm:main}, the key rate bounded by:

\begin{equation}\label{eq:keyrate-bound-low}
r \ge 1-h(Q_Z) - \frac{1}{p_a}\left(\sqrt{1-Q}\left(\sqrt{Q}+\sqrt{p_w}\right) + Q\right),
\end{equation}
where, as before, $Q_Z$ is the probability that $A$'s measurement result is different from $B$'s in the event $C$ sends ``$-1$''.  The values $Q_Z$, $Q$, $p_w$, and $p_a$ may all be estimated by $A$ and $B$ without difficulty.  So long as $r > 0$, a secure secret key may be distilled.  Figure \ref{fig:bound26} show graphs of this bound as a function of $Q$ assuming $Q_Z = Q$.  In particular, we observe that, if $p_w = Q$ and $p_a = 1/2$, our protocol maintains a positive key-rate for $Q \le .0335$.

Note, however, that $A$ and $B$ may estimate the value of $\braket{f_2|f_2}$ and $\braket{f_3|f_3}$ by estimating the probability that $C$ sends ``$-1$'' given that $A$ and $B$'s measurement results differ.  Indeed, call this probability $p_{-1}(A\ne B)$ then, using our symmetry assumptions and the fact that $\braket{x|x} \ge 0$ for any $x$:
\[
\frac{1}{2}\braket{f_2|f_2} \le \frac{1}{2}(\braket{f_2|f_2} + \braket{f_3|f_3}) = p_{-1}(A\ne B).
\]
Similarly for $\braket{f_3|f_3}$.

For instance, if both $\braket{f_2|f_2}$ and $\braket{f_3|f_3}$ are less than or equal to $Q$ (considering our analysis in the depolarization scenario above, where $\braket{f_2|f_2} = \braket{f_3|f_3} = Q/2$ assuming $p=q$, this seems a reasonable assumption), then, using Equation \ref{eq:first-bound} with $\braket{f_1|f_1} = 1$ gives us a key rate of:
\begin{equation}\label{eq:keyrate-bound-high}
r \ge 1-h(Q_Z) - \frac{1}{p_a}\left(\sqrt{1-Q}\left(Q + \sqrt{p_w}\right) + Q^2\right).
\end{equation}
Figure \ref{fig:bound27} shows a graph of this key rate bound when $Q_Z = Q^2/p_a$ (from Equation \ref{eq:Qz}, this produces an upper-bound for $h(Q_Z)$ assuming $Q \le \sqrt{p_a/2}$).  We see here, the key rate is positive so long as $Q \le .1065$ when $p_a = 1/2$; the key rate is positive for $Q \le .0525$ when $p_a = .3$ (the smallest value of $p_a$ we observed in the depolarization scenario above).  Compare these results with BB84 as already mentioned; also the three state BB84 protocol, which supports a maximal error rate of $.042$ \cite{QKD-BB84-three-state}, and B92 \cite{QKD-B92} which maintains a positive key rate for $Q \le .048$ (assuming no preprocessing) \cite{QKD-renner-keyrate}.

Observe that, as $p_a$ drops, we do not take into account the necessary drop in the values $\braket{f_i|f_i}$.  In practice, the user may better bound these values using Equation \ref{eq:pra} and also the values $p_{-1}(A \ne B)$ and $p_{-1}(A=B)$ (the latter being the probability that $C$ sends ``$-1$'' given that both $A$ and $B$'s measurement result is the same).  Then, perhaps using numerical optimization, the user may produce a more accurate key rate value.  Our bounds that we described above are exact equations, and are ``worst-case'' (the key rate can only be higher or equal to the values we found).  However, even with $p_a = .3$, using our lower-bounds, the key rate is positive for $Q \le .0525$ which is superior to B92 and three state BB84, and so we are content to leave our analysis at that.  There may be future work in deriving a better lower bound for the key rate taking better account of $p_a$.

\section{Security Against General Attacks and Third-Party Eavesdroppers}

In the previous section we considered only collective attacks.  However, $A$ and $B$ may symmetrize their raw key by permuting it using a permutation chosen publicly by $A$.  As shown in \cite{QKD-general-attack,QKD-general-attack2}, for such permutation invariant protocols, to prove security against general attacks, it is sufficient to consider only collective attacks.  Thus, in the asymptotic scenario, our security key rate bound derived last section applies even against any possible general attack the adversarial sever may perform.

Finally, it is evident that any attack a third-party eavesdropper may perform, may also be ``absorbed'' into the server's attack operator $U$ (including an attack whereby $E$ sends a message on the classical channel, impersonating $C$ - for example, such an impersonation attack will only lead to more noise in the $p_w$ parameter, and/or less accepted iterations).  Thus the lower bound on the key rate from Theorem \ref{thm:main} applies even in this case.  Note that, if we trust the server to be ``semi-honest'', and if we model the third party eavesdropper's attack via a depolarization channel (as was done when analyzing the key rate of B92 \cite{QKD-renner-keyrate}), then we may use our analysis above to show our protocol maintains a positive key rate so long as $Q \le .199$ (though in this case, an authenticated channel should be used by $C$ so that $E$ cannot alter the message).

\section{Closing Remarks}

We have designed a new mediated semi-quantum protocol and proven that it may hold equivalent security guarantees to that of a fully quantum QKD protocol.  Indeed, even in the worst case, our protocol can suffer an amount of noise comparable to fully quantum protocols, before the adversary holds too much information that the users must abort.

There are many questions that might provide fruitful future work.  For instance, it would be nice to improve Theorem \ref{thm:main} to take into better account the factor $p_a$; our bound is a ``worst-case'' solution - but there might be improvements here.

From a practical standpoint, there are several issues.  In particular, it would be better to design a single-qubit version of this protocol so that Bell states are not required.  It is not difficult to do so if we allow the qubit to travel from $C$ to $A$, back to $C$, to $B$, and then return to $C$.  If the qubit sent from $C$ initially was $\ket{+} = \frac{1}{\sqrt{2}}\ket{0} + \frac{1}{\sqrt{2}}\ket{1}$, our protocol should work similarly (now $C$ measuring in the $X$ basis and sending ``$-1$'' if he reads $\ket{-}$).  However the security analysis is more difficult in this setting as the qubit passes through $C$ twice; in this case it would seem our trick of delaying an entanglement attack (Section 4.1) no longer holds on the second pass.

\begin{figure}
\includegraphics[width=250pt]{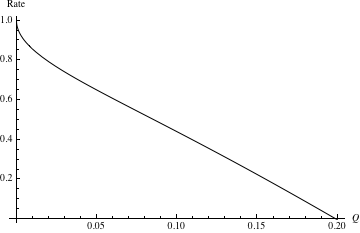}
\caption{Key rate when the center is ``semi-honest'' and the noise in both channels is modeled by two independent depolarizing channels with parameters $p$ and $q$ respectively.  The key rate is plotted as a function of $Q = p/2$: the error rate of $A$ and $B$'s measurement results.  In this graph, we assume $p=q$; that is the noise in both directions is equal.  Observe the key rate is positive so long as $Q \le .199$.}\label{fig:honest-bound}
\end{figure}

\begin{figure}
\includegraphics[width=250pt]{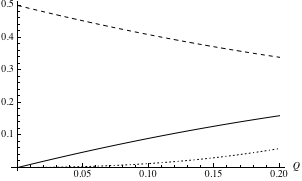}
\caption{A graph of various statistics in the depolarization example when $p=q$ ($Q$ is the error rate of $A$ and $B$'s measurement results; namely $Q = p/2$).  Solid line is the value of $p_w$; dashed line is $p_a$; dotted line (bottom) is $Q_Z$, the actual error in $A$ and $B$'s raw key.  Observe that $Q_Z$ is very small as $Q$ increases.  This is because the quantum server is ``throwing out'' most iterations where their measurement results differ.  This drops the value of $p_a$ and so $A$ and $B$ require more iterations; however less information is leaked due to error correction.}\label{fig:honest-bound-error}
\end{figure}

\begin{figure}
\includegraphics[width=250pt]{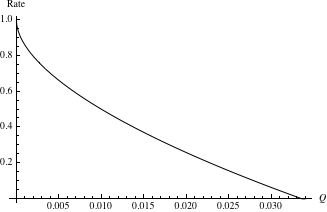}
\caption{A graph of Equation \ref{eq:keyrate-bound-low}, the key rate of our protocol in the worst case scenario when $C$ is adversarial and $A$ and $B$ only use $p_w$ to bound $I(A:C)$.  Here, the key rate is positive so long as $Q \le .0335$.}\label{fig:bound26}
\end{figure}

\begin{figure}
\includegraphics[width=250pt]{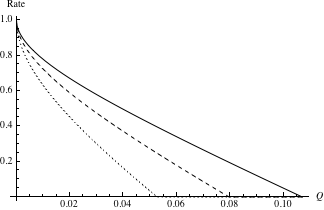}
\caption{A graph of Equation \ref{eq:keyrate-bound-high}, the key rate of our protocol when $C$ is adversarial and $A$ and $B$, using $p_{-1}(A\ne B)$, observe (or enforce) $\braket{f_2|f_2}, \braket{f_3|f_3} \le Q$; also assuming that $p_w = Q$.  Solid line is when $p_a = .5$; dashed line is for $p_a = .4$, dotted line is when $p_a = .3$.  The key rate is positive for $Q \le .1065$ when $p_a = .5$ and when $p_a = .3$, the key rate is positive for $Q \le .0525$;  see the text for a discussion on how $p_a$ might be better accounted for.}\label{fig:bound27}
\end{figure}


\end{document}